\def\ba{\begin{equation}}
\def\ea{\end{equation}}
\def\bea{\begin{eqnarray}}
\def\eea{\end{eqnarray}}
\def\ben{\begin{equation*}}
\def\een{\end{equation*}}
\def\bean{\begin{eqnarray*}}
\def\eean{\end{eqnarray*}}
\def\bma{\begin{mathletters}}
\def\ema{\end{mathletters}}
\def\bi{\begin{itemize}}
\def\ei{\end{itemize}}
\newcommand{\be}{\begin{equation}}
\newcommand{\ee}{\end{equation}}
\newcommand{\kommentar}[1]{}
\newcommand{\forget}[1]{}
\newtheorem{theorem}{Theorem}
\newtheorem{condition}[theorem]{Condition}
\newtheorem{conjecture}[theorem]{Conjecture}
\newtheorem{corollary}[theorem]{Corollary}
\newtheorem{definition}[theorem]{Definition}
\newtheorem{example}[theorem]{Example}
\newtheorem{lemma}[theorem]{Lemma}
\newtheorem{observation}[theorem]{Observation}
\begin{document}

\title{Persistency of Genuine Correlations Under Particle Loss}


\author{Biswajit Paul}
\affiliation{Department of Mathematics, South Malda College, Malda, West Bengal, India}

\author{Kaushiki Mukherjee}
\affiliation{Department of Mathematics, Goverment Girls' General Degree College, Ekbalpore, Kolkata-700023, India}

\author{Ajoy Sen}
\affiliation{Department of Applied Mathematics, University of Calcutta, 92, A.P.C Road, Kolkata-700009, India}

\author{Debasis Sarkar}
\affiliation{Department of Applied Mathematics, University of Calcutta, 92, A.P.C Road, Kolkata-700009, India}

\author{Amit Mukherjee}
\affiliation{Physics and Applied Mathematics Unit, Indian Statistical Institute, 203, B.T. Road, Kolkata-700108, India}

\author{Arup Roy}
\affiliation{Physics and Applied Mathematics Unit, Indian Statistical Institute, 203, B.T. Road, Kolkata-700108, India}

\author{Some Sankar Bhattacharya}
\affiliation{Physics and Applied Mathematics Unit, Indian Statistical Institute, 203, B.T. Road, Kolkata-700108, India}


\begin{abstract}
In a recent work [\href{http://journals.aps.org/pra/abstract/10.1103/PhysRevA.86.042113}{Phys. Rev. A 86, 042113 (2012)}] the question of persistency of entanglement and nonlocality of multi-party systems under particle loss has been addressed. This question is of immense importance considering the practical realization of the information theoretic tasks which make use of the power of quantum correlations. But in multipartite scenario more interesting cases arise since subsystems can also possess genuineness in correlation which is prevalently inequivalent to the bipartite scenario. In this work, we investigate the persistency of such genuine correlations under particle loss. Keeping in mind the practical importance, considerable attention has been devoted to find the multipartite states which exhibit maximal persistency of genuine correlations.       
\end{abstract}


\maketitle

	
\section{Introduction}

Correlations play a fundamental role in quantum information theory. Entanglement \cite{Reviw Ent}, quantum steering \cite{Wiseman07} and nonlocality \cite{Reviw NL} are considered as prime features of quantum correlations. Quantum entanglement is a physical phenomenon that occurs when many particles are generated or interacted in such a fashion that the quantum state of each particle cannot describe the full system separately but it can be described holistically only. This kind of quantum states can be used to demonstrate nonlocality where the statistics generated from each subsystems can not be reproduced by any local realistic theory \cite{Bell64}. Bell nonlocal correlation along with entanglement are found to be the key resource for many information processing tasks such as teleportation \cite{Bennett93}, dense coding \cite{Bennett92}, randomness certification \cite{random}, key distribution \cite{key}, dimension witness\cite{dw}, Bayesian game theoretic applications \cite{game} so on so forth. Quantum steering \cite{Wiseman07} is a scenario where one party can remotely prepare the state of other party who are spatially separated, by applying suitable choice of measurements. This curious feature has also found applications in one-sided device independent cryptography\cite{Branciard12}. In a recent work\cite{Quintino15}, it has been shown that entanglement, steering and nonlocality are inequivalent notions under general quantum operations. 

Entanglement, steering and nonlocality are well understood in the bipartite scenario. But for more than two parties complexity increases resulting multipartite case to be richer in essence\cite{Reviw Ent}. In recent past a considerable number of attempts have been made to understand the genuine multipartite correlations which are remarkably different from their bipartite counterpart. An n-partite entangled state will be called genuine\cite{gme} if and only if the state is not separable with respect to any m-partition ($m \le n$) of the subsystems. Being useful resource for computation\cite{comp}, simulation\cite{sim}, metrology\cite{metr}, study of genuine multi-partite entanglement is a field of latest attraction. It is even useful for dinning cryptography problem \cite{Ramij15}. Similarly multipartite nonlocality is also not very easy to understand compared to the bipartite cases and found to be an important resource. Extension of quantum steering phenomena to multipartite scenario has been recently explored \cite{multisteer}. Even genuineness of steering found some importance in a number of recent works. So any kind of robustness or preservation to losses of these non classical features is really an important issue in case of practical implementation of information processing tasks. \\

Idea of persistency of entanglement and nonlocal features of quantum correlations under particle loss scenario (i.e., minimal number of particles to be lost for those non classical features to vanish completely) has gained interest in recent times \cite{Raussendorf2001,Brunner12,Vertesi16}. Here the loss of particles render the situation where the information about particles becomes inaccessible. For example, one can consider the multi-party quantum cryptography protocols where a number of parties are not willing to cooperate. Hence the idea of persistency of quantum correlations is crucial to the implementation of such information theoretic tasks. In\cite{Raussendorf2001}, persistency of correlation was first defined as the possibility of obtaining residual correlation when a selective measurement is performed on a subsystem. But Brunner et. al.\cite{Brunner12} have investigated the persistency of entanglement and nonlocality in a stronger scenario where a subsystem is lost completely in comparison to the earlier definition of persistency as in\cite{Raussendorf2001} for a number of special multipartite classes such as Cluster states. The authors have also discussed the possibility of maximal persistency of entanglement and nonlocality for $\mathbf{W}$ class of states. They further explored the relation between Genuine multipartite entanglement(GE) and persistence by providing an example that GE does not always imply maximal persistency. Recently Divi\'{a}nszky \emph{et. al} \cite{Vertesi16} have provided a simple upper bound on the persistency of nonlocality for $\mathbf{W}$ states and any permutation-symmetric state with two settings per party. They also provide a family of Bell inequalities which test the nonlocality under particle loss. A similar notion of persistency can also be defined for quantum steering, an weaker notion of nonlocality. This is important since the persistency of steering is required while considering star-type network for one-sided device independent quantum key distribution(DIQKD) under particle loss\cite{Branciard12}.

Keeping in mind the usefulness of genuine correlation in a multi-party scenario a pertinent question is of the persistence of GE along with GNL and similarly for genuine steering(GS)\cite{multisteer} in different classes of multipartite states. In this work we define the notion of persistency of genuine correlation and hence study the capacity of different classes of multipartite states to persist genuineness. We further investigate the possibility of achieving maximum persistency of correlations within these classes of multipartite states which is of practical interest.\\

In the following sections we first(Sec.\ref{sec1}) provide a motivation for considering the concept of persistency for genuine correlations. In Sec.\ref{sec2} we briefly provide the relevant definitions and notations. Sec.\ref{sec3} consists of our results regarding the persistency of genuine correlation for a varied class of multipartite states and the possibility of achieving maximal persistency for certain classes of multipartite states. We conclude with Sec.\ref{sec4} where we discuss the implications of our work in understanding multipartite correlations and further scopes for generalization of the results presented in this work.  

\section{Motivation}\label{sec1} 
Given that one can quantify genuine correlation of any multipartite state, then a common intuition while studying persistency of correlation under particle loss is that whichever multipartite state has the higher amount of genuine entanglement will be more persistent. But what we find out in this work is quite contrary to this intuition. We show that there exist states with less entanglement(in the sense of a valid measure of entanglement) that can have maximal persistency of entanglement whereas a class of higher entangled states has minimal persistency.

In case of multipartite nonlocality another important notion is that of monogamy. This states that all the reduced systems of a parent multi-party nonlocally correlated state obtained by tracing out every other party can not show nonlocality \cite{fei}\cite{toner}. While considering particle loss, the residual systems that can be achieved by tracing out every other party in the system are either nonlocal or not. If all of them show some nonlocality(by violation of some nonlocal inequality) then the notion of monogamy fails but the persistency of nonlocality is maintained. Thus one can consider the concept of monogamy for multi-party states as complementary to the persistency of nonlocality.

Another question that naturally arises in genuine nonlocality scenario is whether the possibility of performing local filtering operations can strictly enlarge the genuine nonlocality-persistent states at per with similar results obtained by the authors of \cite{Brunner12} in case of nonlocality and hidden nonlocality. We answer this question in affirmative and also present examples where the persistency of genuine nonlocality is 1, i.e., minimum possible; but when allowed to perform local filtering operations the persistence of 'hidden' genuine nonlocality can be maximum.

From a practical perspective, the question of achieving maximum persistency is very crucial. To understand this, consider a star network. A simple futuristic banking system is an example of a star network, where the central body bank tries to maintain quantum correlation with multiple customers. Now it is quite unexpected that since one of customers leaves the system by closing her account, the existing quantum correlation between the bank and other customers gets destroyed. So, the multipartite state shared in the star network should be something which has a higher persistency under particle loss. In this sense achieving maximum persistency is ideal. 

With these motivations in mind we move on to present our results. But before that let us discuss a few definitions and relevant tools.

\section{Definition and Tools}\label{sec2} 

\subsection{Definitions}
Consider a quantum state $\rho$ of $N$ systems. Take the partial trace over $k<N$ systems $j_1,...,j_k\in \{1,...,N\}$, and denote the reduced state 

\ba \rho_{(j_1,...,j_k)} = \text{tr}_{j_1,...,j_k} (\rho) \ea

\begin{definition} 
The strong persistency of entanglement of $\rho$, $P_E(\rho)$, is then defined as the minimal $k$ such that the reduced state $ \rho_{(j_1,...,j_k)}$ becomes fully separable, for at least one set of subsystems $j_1,...,j_k $.
\end{definition}
In \cite{Brunner12} the authors defined this stronger notion of persistency, and tried to relate them with the slightly different concept of persistency of entanglement introduced in Ref.~\cite{Raussendorf2001}. Throughout this work we have adopted this `stronger' notion of persistency, which deals with the complete loss of information of particles.

While checking for persistency of entanglement, when one considers mixed multipartite states there does not exist any necessary and sufficient criterion to detect entanglement. But in literature there are certain sufficient conditions\cite{entdet,guhne} which can be used to witness entanglement conclusively. For our purpose we make use of the criterion in \cite{guhne} to detect the presence of entanglement.

\begin{definition}  
The persistency of nonlocality of $\rho$, $P_{NL}(\rho)$, is defined in a similar way, but now demanding only that the reduced state $ \rho_{(j_1,...,j_k)}$ becomes local, i.e., that the probability distribution obtained from local measurements on $ \rho_{(j_1,...,j_k)}$ does not violate any Bell inequality. Formally this means that the probability distribution
\ba p(a_1...a_{N}|x_1...x_{N}) = \text{tr}(\rho M_{a_1}^{x_1}\otimes ... \otimes M_{a_N}^{x_N})\ea
admits a hidden variable model for general local measurement operators $M_{a_i}^{x_i}$, with $M_{a_i}^{x_i}=\openone$ if $i=j_1,...,j_k$ (the systems that have been traced out) and $\sum_{a_i} M_{a_i}^{x_i}=\openone$ otherwise. Here $x_i$ and $a_i$ denote the measurement setting and its outcome, respectively, of party $i$.
\end{definition} 
In this context one can also consider the concept of hidden nonlocality \cite{Popescu95}.  
\begin{definition} 
That is, we strengthen the above definition and demand that the reduced state $ \rho_{(j_1,...,j_k)}$ is local even after the remaining parties have performed a local filtering. In this case, persistency of nonlocality is denoted by $P_{HNL}(\rho)$.
\end{definition} 

For any state $\rho$, one has that
\ba N-1\geq P_E(\rho) \geq P_{HNL}(\rho) \geq P_{NL}(\rho) \geq 1. \ea
The second inequality comes from the fact that (i) entanglement is necessary for having quantum nonlocality, and (ii) there exist entangled states which are local \cite{Werner89}. The third inequality follows from the fact that there exist local quantum states featuring hidden nonlocality \cite{Popescu95}.

In a similar spirit one can also define the concept of persistency of steering as follows:
\begin{definition}\label{persteer}
The persistency of steering of $\rho$, $P_S(\rho)$, is defined as the minimal $k$ such that the reduced state $ \rho_{(j_1,...,j_k)}$ becomes fully unsteerable, for at least one set of subsystems $j_1,...,j_k $.
\end{definition}
Then the revised hierarchy of the persistency of a state $\rho$ will be given by
\ba N-1\geq P_E(\rho) \geq P_{S}(\rho) \geq P_{NL}(\rho) \geq 1. \ea
The second inequality follows from the fact that entanglement is necessary for having quantum steering and there exist entangled states which are unsteerable\cite{Quintino15}. The third inequality follows since quantum steering is necessary for having nonlocality and there exist steerable states which are local\cite{Wiseman07}\cite{Quintino15}.

The notions described above have a large impact from an operational angle. Here we try to characterize the robustness of multipartite quantum correlations under loss of particles. In particular, experimentally how entanglement behaves under particle loss has been investigated  but their experiment deals with multi-qubit states \cite{mohamed2006}. Clearly, if violation of a Bell inequality is observed for all the reduced states of $\rho$  under the condition where $k$ parties are traced out, then it is guaranteed that all reduced states possesses entanglement, irrespectively of the Hilbert space dimension of the state and the alignment of the measurement devices.

In case of more than two parties one has different notions of correlation. The concept of genuine correlation provides an interesting paradigm to understand correlation in multipartite scenario by excluding the possibility of bipartite correlations. At this point let us provide the definitions for genuine entanglement, nonlocality and steering. For simplicity we define these notions for three parties but reader can easily extend these definitions for higher number of parties. 

\begin{definition}
A quantum state is bi-separable if it can be written as
$\rho_{ABC}=\sum_{\lambda}p_{A(BC)}^\lambda\rho_{A}^\lambda\otimes \rho_{BC}^\lambda+\sum_{\mu}p_{B(AC)}^\mu\rho_{B}^\mu\otimes \rho_{AC}^\mu+\sum_{\nu}p_{C(AB)}^\nu\rho_{C}^\nu\otimes \rho_{AB}^\nu$ where $p_{A(BC)}^\lambda$, $p_{B(AC)}^\mu$ and $p_{C(AB)}^\nu$ are probability distributions. Finally a state is genuine multipartite entangled, if it is not bi-separable.
\end{definition}
There does not exist any necessary and sufficient criterion to detect genuine entanglement of mixed multipartite states. But several sufficient conditions for witnessing genuine entanglement have been proposed\cite{entdet,guhne}. Here we make use of the sufficient criteria given in \cite{guhne} for our purpose.

Now let us consider the correlation scenario among three parties with inputs and outputs as $\{X,Y,Z\}$ and $\{a,b,c\}$ respectively. Then one can have the following definition for genuine nonlocal correlations\cite{Bancal}-
\begin{definition}
Suppose that $P(abc|XYZ)$ can be written in the form 
\begin{align}\label{genuinenonsignonloc}
&P(abc|XYZ) = \sum_\lambda q_\lambda P_{\lambda}(ab|XY)P_{\lambda}(c|Z)+\nonumber \\ 
&\sum_\mu q_\mu P_{\mu}(ac|XZ)P_{\mu}(b|Y)+\sum_\nu q_\nu P_{\nu}(bc|YZ)P_{\nu}(a|X),
\end{align}
where the bipartite terms are non-signalling. Then the correlations are $NS_2$-local. Otherwise, we say that they are genuinely 3-way nonlocal.
\end{definition}

In a recent work\cite{Jeba16} the concept of genuine steering among three parties have been defined as following-
\begin{definition}
Suppose that $P(abc|XYZ)$ cannot be explained by the following nonlocal LHS-LHV (NLHS)
model:
\begin{align}
P(abc|XYZ)&=\sum_\lambda p_\lambda P(ab|XY,\rho_{AB}^\lambda)P_\lambda(c|Z)+\nonumber \\
&\sum_\lambda q_\lambda P(a|X,\rho_A^\lambda)P(b|Y,\rho_B^\lambda)P_\lambda(c|Z), \label{NLHS}
\end{align}
where $P(ab|XY,\rho_\lambda)$ denotes the nonlocal probability distribution arising from two-qubit state $\rho_{AB}^\lambda$, and $P(a|X,\rho_A^\lambda)$ 
and $P(b|Y,\rho_B^\lambda)$ are the distributions arising from qubit states $\rho_A^\lambda$ and $\rho_B^\lambda$ and $\{p_\lambda\}$, $\{q_\lambda\}$ are probability distributions. 
Then the quantum correlation exhibits genuine steering from Charlie to Alice and Bob.
\end{definition}

Based on the definitions of genuine correlation presented above, one can also define the following quantities regarding the persistency of genuineness in correlations: 
\begin{definition}\label{perge}
Persistency of genuine entanglement ($P_{GE}$), nonlocality ($P_{GNL}$) and steering ($P_{GS}$) for a quantum state is defined as the minimum number of particle lost so that the reduced state is no longer genuinely entangled, nonlocal and steerable respectively.  
\end{definition}
\begin{definition}\label{perhgnl}
Persistency of genuine nonlocality under local filtering operation ($P_{HGNL}$) for a quantum state is defined as the minimum number of particle lost so that the reduced state is no longer genuinely nonlocal under local filtering operations.  
\end{definition}

One can clearly see that $P_{GE}\ge P_{GS}\ge P_{GNL}$. 
The first inequality follows from the fact genuine entanglement is necessary for genuine steering. The second inequality comes from the requirement that genuine steering is necessary for genuine nonlocality and there exist genuine steerable states which are not genuinely nonlocal\cite{Jeba16}.

\section{Results}\label{sec3} 
In these section we present our results regarding persistency of genuine correlations for a number of classes of multi-party states. These states are important for different information theoretic tasks. Hence the robustness of genuine correlation under particle loss for these states are of practical importance.  
\subsection{A generic class of four qubit states}
It was argued in~\cite{Ver02} that 4-qubits pure states can be classified into nine groups of states. One of these nine groups is called the generic class as with the action of stochastic local operations with classical communication(SLOCC) it is dense in the space of four qubits 
$\mathcal{H}_4\equiv \mathbb{C}^2\otimes\mathbb{C}^2\otimes\mathbb{C}^2\otimes\mathbb{C}^2$.
The generic class is given by
$$
\mathcal{A}\equiv\Big\{z_0u_0+z_1u_1+z_2u_2+z_3u_3\Big|\;z_0,z_1,z_2,z_3\in\mathbb{C}\Big\}\;.
$$ 
where 
\begin{align}
& u_0\equiv|\phi^{+}\rangle|\phi^{+}\rangle\;\;,\;\;u_1\equiv|\phi^{-}\rangle|\phi^{-}\rangle\nonumber\\
& u_2\equiv|\psi^{+}\rangle|\psi^{+}\rangle\;,\;\;
u_3\equiv|\psi^{-}\rangle|\psi^{-}\rangle\nonumber
\end{align}
A pure state of 4 qubits $|\psi\rangle \in \mathcal{A}$ can be written in computational basis as the following: {\small
\begin{eqnarray*}
|\psi\rangle &=&\frac{z_0+z_3}{2}(|0000\rangle
+|1111\rangle)+\frac{z_0-z_3}{2}(|0011\rangle
+|1100\rangle)\\
&&\hspace{.1cm}+\frac{z_1+z_2}{2}(|0101\rangle
+|1010\rangle)+\frac{z_1-z_2}{2}(|0110\rangle +|1001\rangle)
\end{eqnarray*}

The 4-qubits entanglement monotone that is invariant under any permutation of the 4-qubits, the Wong-Christensen 4-tangle~\cite{Uh00} is defined as the following. Let 
$|\psi\rangle\in\mathcal{H}_4\equiv\mathbb{C}^2\otimes\mathbb{C}^2\otimes\mathbb{C}^2\otimes\mathbb{C}^2$, the 4-tangle is 
defined by~\cite{Uh00}
\begin{equation}
\tau_{ABCD}\equiv |\langle\psi|\sigma_y\otimes\sigma_y\otimes\sigma_y\otimes\sigma_y|\psi^{*}\rangle|^2.
\end{equation} 
As a measure for pure bipartite entanglement we first take the tangle or the square of the I-concurrence \cite{deftangle}. In four qubits there are 4 bipartite cuts consisting of one-qubit verses the rest three quibts and three bi-partite cuts consisting of 2-qubits in each cut. Denoting the four qubits by A, B, C, and D, one can define
\begin{align}
\tau_1 & \equiv\frac{1}{4}\left(\tau_{A(BCD)}+\tau_{B(ACD)}+\tau_{C(ABD)}+\tau_{D(ABC)}\right)\label{t1}\\
\tau_2 & \equiv\frac{1}{3}\left(\tau_{(AB)(CD)}+\tau_{(AC)(BD)}+\tau_{(AD)(BC)}\right)\;,\label{t2}
\end{align}
where $\tau_{A(BCD)}$, is the tangle between qubit A and qubits B,C,D. Similarly, $\tau_{(AB)(CD)}$, is the tangle between qubits A,B and qubits C,D.  Reader can note that the maximum value possible for $\tau_1$ is $1$ and the maximum value possible for $\tau_2$ is $3/2$. This is since a maximally entangled $4\times 4$ bipartite state has tangle $3/2$. 
However, no 4 qubit pure state can achieve this value for $\tau_2$. In \cite{GBS}, it has been shown that 
\begin{equation}
\tau_1\leq\tau_2\leq\frac{4}{3}\tau_1\;.
\label{bounds}
\end{equation}
Hence, it follows that $\tau_2\leq 4/3<3/2$(since $\tau_1$ is bounded by $1$). Moreover, from the inequality above, it follows that for states with $\tau_2=4/3$, $\tau_1$ must be equal to $1$.

In \cite{GBS} it was shown that if $|\psi\rangle\in\mathcal{H}_4\equiv\mathbb{C}^{2}\otimes\mathbb{C}^{2}\otimes\mathbb{C}^{2}\otimes\mathbb{C}^{2}$ be a normalized four qubit state, then
$$
\tau_1\left(|\psi\rangle\right)=1\;\;\;\text{if and only if}\;\;\;|\psi\rangle\in\mathcal{A}\;,
$$
up to local unitary transformation.

In~\cite{Ver02} the authors argued that among all the 4-qubit pure states, only states in $\mathcal{A}$ have $\tau_1=1$. Later in \cite{Ver03} this was fully proved.

Thus it follows that among all the states with $\tau_1=1$, we have
$$ 1\leq\tau_2\leq\frac{4}{3}\;.$$
It is interesting to note that the 4-qubit GHZ state gives the minimum possible value for $\tau_2$, i.e. it is the least entangled state among all the states with $\tau_1=1$.

For all 4-qubits pure states $\tau_{ABCD}=4\tau_1-3\tau_2$. $4\tau_1$ can be interpreted as the total amount of entanglement in the multi-party system, whereas $3\tau_2$ can be interpreted as the total amount of entanglement shared among groups consisting of two qubits each and thus the 4-tangle can be interpreted as the residual entanglement that can not be shared among the two qubits groups\cite{GBS}.

\subsubsection{$\tau_{min}$ and $\mathcal{M}$ classes}

A normalized state $|\psi\rangle\in\mathcal{H}_4$ is maximally entangled (i.e. $\tau_2(|\psi\rangle)=4/3$)
if and only if up to local unitary $|\psi\rangle\in\mathcal{M}$, where $\mathcal{M}$ is the set of states in $\mathcal{A}$ 
with zero 4-tangle\cite{GBS}.

As defined earlier a pure state $\psi=\sum_{j=0}^{3}z_ju_j$ in $\mathcal{A}$ depends on four complex parameters $z_j$ $(j=0,1,2,3)$. The condition that the 4-tangle 
$\tau_{ABCD}(\psi)=|\sum_{j=0}^{3}z_{j}^{2}|^2=0$ implies that 
the states in the maximally entangled class $\mathcal{M}$ are characterized by 4 \emph{real} parameters. The reduction in number of parameters is due to the normalization condition and ignoring the global phase. When written in its polar form $z_j=\sqrt{p_j}e^{i\theta_j}$ (with non-negative $p_j$ and $\theta_j\in[0,2\pi]$)
one can denote the class $\mathcal{M}$ as follows:
\begin{equation}
\mathcal{M}=\left\{\sum_{j=0}^{3}\sqrt{p_j}e^{i\theta_j}u_j\Big|\;\sum_{j=0}^{3}p_j=1\;,\; \sum_{j=0}^{3}p_je^{2i\theta_j}=0\right\}\;.
\label{classm}
\end{equation} 
 
Another important set of the states in $\mathcal{A}$, denoted by $\mathcal{T}_{\min}$, with the minimum possible value $\tau_2=1$, can be characterized as follows:
\begin{align}
\mathcal{T}_{\min} & \equiv\left\{\psi\in\mathcal{A}\Big|\tau_2(\psi)=1\right\}\nonumber\\
& =\left\{\sum_{j=0}^{3}x_ju_j\Big|\;\sum_{j=0}^{3}x_{j}^{2}=1\;,\;x_j\in\mathbb{R}\right\}
\end{align}
For example, the four qubits GHZ state belongs to $\mathcal{T}_{\min}$. In this sense, the GHZ state is a state in $\mathcal{A}$ with the least amount of entanglement.
\subsubsection{Persistency of Entanglement and Genuine Entanglement}
{\bf $P_E$ and $P_{GE}$ of $\tau_{min}$ class-} One can provide the following conditions for $P_{GE}$ and $P_{E}$ in the $\tau_{min}$ class-
\begin{condition}\label{taumin1} 
$P_{GE}>1$ if 
\begin{eqnarray*}
2|x_2^2-x_3^2|> && 2[x_2^2+x_3^2]+(x_0+x_1)^2+2[x_0^2-x_1^2]\\&&-8 \min \{|x_0|,|x_1|\}\max\{|x_2|,|x_3|\}
\end{eqnarray*}
\end{condition} 

\begin{condition}\label{taumin2} 
$P_{E}>1$ if
\begin{itemize}
\item for $sgn(x_0x_1)=1$ 
\begin{equation*}
|x_2^2-x_3^2|> [x_0^2-x_1^2]+4 \min\{|x_0|,|x_1|\}\max\{|x_2|,|x_3|\}
\end{equation*}
\item for $sgn(x_0x_1)=-1$ 
\begin{equation*}
|x_2^2-x_3^2|> [x_0^2-x_1^2]-4 \min\{|x_0|,|x_1|\}\max\{|x_2|,|x_3|\}
\end{equation*}
\end{itemize}
\end{condition}

At this point one might wonder whether the $\tau_{min}$ class contains states with minimal persistency of entanglement. Let us consider the following example of the state $|GHZ_4\rangle=\frac{1}{\sqrt{2}}[|0000\rangle+|1111\rangle]$. It is straightforward to show(see in Appendix) that both $P_E$ and $P_{GE}$ of $|GHZ_4\rangle$ are $1$. This implies that even for the same value of the entanglement measure $\tau_2$ throughout the $\tau_{min}$ class there exist states which have different capabilities of persisting entanglement and genuine entanglement.   

{\bf $P_E$ and $P_{GE}$ of $\mathcal{M}$ class-} Intuitively it can be expected that $\mathcal{M}$ class states being maximally entangled might have greater persistency of entanglement compared to the states in the $\tau_{min}$ class. Let us take the example of cluster states. Cluster states\cite{Raussendorf2001} form a class of multi-party entangled quantum states with surprising and useful properties. The main interest in these states draws from their role as a universal resource in the one-way quantum computer\cite{comp}: Given a collection of sufficiently many particles that are prepared in a cluster state, one can realize any quantum computation by simply measuring the particles, one by one, in a specific order and basis. By the measurements, one exploits correlations in quantum mechanics which are rich enough to allow for universal logical processing. A four party cluster state is given by the following
\begin{equation}
\eta_4=\frac{1}{2}\left[|0000\rangle+|0011\rangle+|1100\rangle-|1111\rangle\right]  \nonumber  
\end{equation}
 
The tripartite reduced states can be written in a bi-separable form\cite{Brunner12}. Thus $P_{GE}=1$ for cluster states. This is in contrast to the states in the $\tau_{min}$ class which are less entangled according to the measure $\tau_2$ but can have $P_{GE}>1$ according to Condition.\ref{taumin1}-\ref{taumin2}.    

\subsubsection{Persistency of Nonlocality and Genuine Nonlocality}
Now let us come to the question of persistence of nonlocality for states in the $\tau_{min}$ class.   
\begin{theorem}
$P_{NL}(\rho)=1$ for all four qubit states $\rho\in \tau_{min}$. 
\end{theorem}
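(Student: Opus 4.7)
The plan is to establish $P_{NL}(\rho)\le 1$, the matching lower bound $P_{NL}(\rho)\ge 1$ being routine. The task is therefore, for each $|\psi\rangle\in\tau_{\min}$, to exhibit a single party whose removal leaves a three-qubit state admitting a local hidden variable model.

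First I would parameterise $|\psi\rangle=\sum_{j=0}^{3} x_j u_j$ with real $x_j$ satisfying $\sum_j x_j^2=1$, and work from the computational-basis expansion already supplied in the excerpt. Two structural features jump out immediately: $|\psi\rangle$ is supported purely on even-parity basis vectors (so $Z^{\otimes 4}|\psi\rangle=|\psi\rangle$), and the coefficients of bit-flip-related basis states coincide, forcing $X^{\otimes 4}|\psi\rangle=|\psi\rangle$. Splitting $|\psi\rangle=|\psi_0\rangle_{ABC}|0\rangle_D+|\psi_1\rangle_{ABC}|1\rangle_D$ and tracing out $D$ yields $\rho_{ABC}=|\psi_0\rangle\langle\psi_0|+|\psi_1\rangle\langle\psi_1|$, with $|\psi_0\rangle$ in the $+1$ eigenspace of $Z^{\otimes 3}$, $|\psi_1\rangle=X^{\otimes 3}|\psi_0\rangle$ in the $-1$ eigenspace, and each of squared norm $1/2$. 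In particular $\rho_{ABC}$ commutes with both $Z^{\otimes 3}$ and $X^{\otimes 3}$ and is the equal mixture of two bit-flip-related pure states.

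The heart of the argument is then to produce an explicit fully separable decomposition
$\rho_{ABC}=\sum_\lambda p_\lambda |\alpha_\lambda\rangle\langle\alpha_\lambda|_A\otimes|\beta_\lambda\rangle\langle\beta_\lambda|_B\otimes|\gamma_\lambda\rangle\langle\gamma_\lambda|_C,$
organised so that each product term is paired with its $X^{\otimes 3}$ image to respect the symmetries above. Once supplied, a product LHV model for $\rho_{ABC}$ is immediate (assign the hidden variable $\lambda$ and have each party output the marginals of its local state), so $\rho_{ABC}$ is Bell-local and $P_{NL}(|\psi\rangle)\le 1$ follows.

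The hard part will be constructing this decomposition uniformly across the three-sphere of parameters $(x_0,x_1,x_2,x_3)$ while keeping all weights non-negative. The degenerate strata---in particular the $\mathrm{GHZ}_4$ slice (only $x_0$ and $x_3$ non-zero), and the Bell-pair-product points $u_j$ themselves---are the stress-test cases where the generic decomposition can fail. On those strata one might need to trace out a different qubit, or exploit additional symmetries of the generic class (for instance simultaneous Hadamard on all four qubits, which permutes $u_1$ and $u_2$), so that a manifestly Bell-local cut is always available.
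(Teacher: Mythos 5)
Your strategy cannot establish the theorem, because it conflates locality with full separability. You propose to certify that a tripartite reduced state is Bell-local by exhibiting a fully separable decomposition of it; but that would prove $P_E(\rho)=1$, a strictly stronger statement than $P_{NL}(\rho)=1$, and one that is false on most of $\tau_{\min}$. The paper itself gives explicit conditions under which states of $\tau_{\min}$ have $P_E>1$ (Condition 2) and even $P_E=3$ (the conditions $S_i>0$); concretely, the four-qubit Dicke state with two excitations lies in $\tau_{\min}$ and all of its two-qubit (hence also three-qubit) marginals are entangled, so for that state no choice of traced-out party leaves a fully separable residue and the decomposition you plan to construct simply does not exist. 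Your structural observations are correct --- the states are invariant under $Z^{\otimes 4}$ and $X^{\otimes 4}$, and the one-particle-loss marginals have the form $|\psi_0\rangle\langle\psi_0|+X^{\otimes 3}|\psi_0\rangle\langle\psi_0|X^{\otimes 3}$, matching the reduced states $\rho_i^3$ written in the paper --- but they only get you to the starting line; they do not produce an LHV model for an entangled mixed state.

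The paper's actual route sidesteps the construction of any hidden-variable model. It restricts to the two-setting, two-outcome, three-party Bell polytope, checks all $46$ Sliwa facets against the four reduced states $\rho_i^3$, observes that each of them can violate only the fourth facet, and then shows (Appendix A) that the four maximal values $B_{Max}(\rho_i^3)$ cannot all exceed the local bound $2$ for a common choice of $(x_0,x_1,x_2,x_3)$; hence at least one reduced state is local in that scenario and $P_{NL}=1$. If you want to salvage your approach you would need to replace ``fully separable decomposition'' by an explicit LHV construction for entangled states in the spirit of Werner's model, valid uniformly over the parameter sphere --- a substantially harder and quite different task; alternatively, adopt the polytope argument, which trades an existence proof of a local model for a finite facet check.
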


\begin{proof}
Let us consider a four qubit state $\rho\in \tau_{min}$. Upon loss of $i$-th particle the reduced states are:
\begin{equation*}
\rho_{i}^3=|\psi_{i}^3\rangle\langle\psi_{i}^3|+|\phi_{i}^3\rangle\langle\phi_{i}^3|
\end{equation*}
for $i=1,2,3,4$ where, $|\psi_{i}^3\rangle=|\tilde{W}_{i}^3\rangle+\frac{x_0+x_1}{2}|111\rangle $, $|\phi_{i}^3\rangle=\sigma_x^{\otimes 3}\left[ |\tilde{W}_{i}^3\rangle+\frac{x_0+x_1}{2}|111\rangle\right]  $ and 
\begin{eqnarray*}
|\tilde{W}_{1}^3\rangle&=&\frac{x_2-x_3}{2}|001\rangle+\frac{x_2+x_3}{2}|010\rangle+\frac{x_0-x_1}{2}|100\rangle\\
|\tilde{W}_{2}^3\rangle&=&\frac{x_2+x_3}{2}|001\rangle+\frac{x_2-x_3}{2}|010\rangle+\frac{x_0-x_1}{2}|100\rangle\\
|\tilde{W}_{3}^3\rangle&=&\frac{x_0-x_1}{2}|001\rangle+\frac{x_2-x_3}{2}|010\rangle+\frac{x_2+x_3}{2}|100\rangle\\
|\tilde{W}_{4}^3\rangle&=&\frac{x_0-x_1}{2}|001\rangle+\frac{x_2+x_3}{2}|010\rangle+\frac{x_2-x_3}{2}|100\rangle
\end{eqnarray*}
To check the nonlocality of these reduced tripartite states let us consider all $46$ facets of the three qubit local polytope\cite{Sliwa03}. One can check that all the reduced states $\rho_{i}^3$ violate only the $4$-th facet (same numbering as in \cite{Sliwa03} has been used for convenience) for different values of the real parameters $\{x_i\}_{i=0}^3$. At the same time it can also be shown (see Appendix.(\ref{4thfacet})) that all reduced states can not violate the $4$-th facet for a common set of parameter values. This implies that the nonlocality of any $\rho\in \tau_{min}$ can not be persisted upon loss of even one of the particles. Hence the theorem.    
\end{proof}
From this theorem one can immediately arrive at the following corollary regarding the persistency of genuine nonlocality
\begin{corollary}
$P_{GNL}(\rho)=1$ for all four qubit states $\rho\in \tau_{min}$. 
\end{corollary}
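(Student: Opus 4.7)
The corollary should follow almost immediately from the preceding theorem by invoking the natural hierarchy between Bell nonlocality and genuine multipartite nonlocality. My plan is as follows.

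First, I would establish the one-line logical fact that a fully Bell-local probability distribution is automatically $NS_2$-local in the sense of the $P(abc|XYZ)$ decomposition given in equation~(\ref{genuinenonsignonloc}). Concretely, if the tripartite distribution factorizes as $P(abc|XYZ)=\sum_{\lambda}q_{\lambda}P_{\lambda}(a|X)P_{\lambda}(b|Y)P_{\lambda}(c|Z)$, then it trivially fits into any of the three bipartite-plus-single forms appearing on the right-hand side of the genuine-nonlocality decomposition (for instance, by grouping $P_{\lambda}(a|X)P_{\lambda}(b|Y)$ into a non-signalling two-party term). Equivalently, in the contrapositive: any genuinely three-way nonlocal distribution is Bell-nonlocal. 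Hence, for any tripartite state, \emph{not genuinely nonlocal} is implied by \emph{Bell-local}.

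Next, I would apply the previous theorem. For any $\rho\in\tau_{\min}$, tracing out any single qubit yields a three-qubit reduced state $\rho_i^3$ whose measurement statistics admit a local hidden variable model, because the theorem has shown that these statistics cannot simultaneously violate any of Śliwa's $46$ facet inequalities characterising the three-qubit local polytope. By the observation above, these statistics therefore also admit an $NS_2$-local model, i.e.\ the reduced state $\rho_i^3$ is not genuinely three-way nonlocal. This already establishes $P_{GNL}(\rho)\leq 1$.

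Finally, combining this with the defining lower bound $P_{GNL}(\rho)\geq 1$ (a single-particle persistency value is the minimum permitted by Definition~\ref{perge}, analogous to the hierarchy $P_{HNL}\geq P_{NL}\geq 1$) gives $P_{GNL}(\rho)=1$. I do not anticipate any real obstacle here: the entire argument is inheritance of the result from the theorem via the stronger-than relation between genuine and standard nonlocality, and no new calculation on the explicit reduced states $\rho_i^3$ is required beyond what was already done for the theorem.
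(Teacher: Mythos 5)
Your proposal is correct and is essentially the paper's own argument: the corollary is stated there as an immediate consequence of the theorem, via exactly the hierarchy you spell out (a fully Bell-local distribution trivially admits an $NS_2$-local decomposition, so the reduced state that the theorem shows to be local is a fortiori not genuinely nonlocal, giving $P_{GNL}(\rho)\leq P_{NL}(\rho)=1$, and the trivial lower bound $P_{GNL}\geq 1$ yields equality). No new computation on the reduced states is needed, just as you anticipate.
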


At this stage a pertinent question would be whether a weaker form of nonlocality can persist upon loss of particles for states in the $\tau_{min}$ class. We deal with this question in the next subsection considering quantum steering as a weaker form of nonlocality.        

\subsubsection{Persistency of Steering and Genuine Steering}
\begin{observation}
There exist states $\rho\in \tau_{min}$ such that $P_{S}(\rho)$ is maximal, i.e. 3. 
\end{observation}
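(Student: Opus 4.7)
The plan is an explicit construction: exhibit a normalized state $|\psi\rangle=\sum_{j=0}^{3}x_{j}u_{j}\in\tau_{min}$ with $x_{j}\in\mathbb{R}$ and $\sum_{j}x_{j}^{2}=1$, and verify that every reduced state obtained by tracing out one or two qubits is still steerable. Since $N-1=3$ is the absolute upper bound for $P_{S}$ on a four-qubit state, a single such witness establishes the observation.

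First I would exploit symmetry to control the case analysis. By restricting to a subfamily with a nontrivial permutation stabiliser (for instance, $x_{2}=x_{3}$, or a fully symmetric choice of the $x_{j}$), the four three-qubit reductions $\rho_{i}^{3}$ and the six two-qubit reductions $\rho_{ij}^{2}$ collapse into only a handful of inequivalent density matrices. This turns a potentially large number of steering tests into just a few.

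Next, for each three-qubit reduction I would reuse the rank-two decomposition $\rho_{i}^{3}=|\psi_{i}^{3}\rangle\langle\psi_{i}^{3}|+|\phi_{i}^{3}\rangle\langle\phi_{i}^{3}|$ already obtained in the proof of the preceding theorem, and then detect steering across at least one trusted-versus-untrusted bipartition (say, one qubit versus the remaining two). A convenient witness is the three-setting linear steering inequality applied across this cut, whose expectation values follow analytically from the $|\tilde{W}_{i}^{3}\rangle$ form of the reductions. For the two-qubit reductions I would employ the CJWR three-setting steering inequality
\begin{equation*}
F_{3}(\rho)=\tfrac{1}{\sqrt{3}}\sum_{k=1}^{3}\bigl|\mathrm{tr}\bigl(\rho\,\sigma_{k}\otimes\sigma_{k}\bigr)\bigr|\leq 1,
\end{equation*}
evaluated after a local rotation that diagonalises the correlation matrix $T_{ij}=\mathrm{tr}(\rho_{ij}^{2}\,\sigma_{a}\otimes\sigma_{b})$; violation then reduces to the statement that the sum of singular values of $T_{ij}$ exceeds $\sqrt{3}$.

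The main obstacle is that monogamy-type constraints prevent all ten reductions of a pure four-qubit state from being simultaneously maximally entangled, so the parameters $(x_{0},x_{1},x_{2},x_{3})$ must be tuned so that every singular-value sum simultaneously beats the relevant steering threshold. Certain tempting choices, such as the GHZ point $x_{0}=1$, $x_{1}=x_{2}=x_{3}=0$, are already known to drop $P_{E}$ (hence $P_{S}$) to $1$, so one cannot pick the extreme corners of the simplex $\sum_{j}x_{j}^{2}=1$. I would therefore combine the symmetry reduction above with a low-dimensional scan of the remaining free parameters, presenting the first point at which every steering inequality is simultaneously violated as the explicit witness state realising $P_{S}(\rho)=3$.
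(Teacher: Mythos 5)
Your overall strategy is the same as the paper's: reduce the claim to showing that, for some choice of the real parameters $(x_0,x_1,x_2,x_3)$, every two-qubit marginal of the state is steerable (which, since steerability survives the re-inclusion of a traced-out party, already takes care of the three-qubit marginals — your separate check of those is harmless but unnecessary), use the symmetry of the class to cut the six pair-marginals down to three inequivalent ones, and then locate a parameter point where all the steering witnesses fire simultaneously. The paper does exactly this, exhibiting the nonempty region graphically and certifying steering via the steering-ellipsoid criterion of Jevtic \emph{et al.} applied to the three reductions $\rho_1^2,\rho_2^2,\rho_3^2$.

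The gap is in your choice of witness: the three-setting CJWR inequality provably cannot do the job for this class. Each two-qubit marginal of a $\tau_{\min}$ state is a centred state with diagonal correlation matrix $T_i$, and violation of $F_3\le 1$ requires $\sum_k|t_{kk}|>\sqrt{3}$, hence by Cauchy--Schwarz $\|T_i\|_F>1$. But a direct computation from the reduced states gives $\|T_1\|_F^2=4\sum_j x_j^4-1$ and $\|T_2\|_F^2+\|T_3\|_F^2=4\bigl(1-\sum_j x_j^4\bigr)$, so
\begin{equation*}
\sum_{i=1}^{3}\|T_i\|_F^2=3
\end{equation*}
identically on $\tau_{\min}$. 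At least one reduction therefore has $\|T_i\|_F\le 1$ and cannot strictly violate your inequality, for \emph{any} admissible $(x_0,x_1,x_2,x_3)$; your parameter scan would come up empty. A concrete illustration is the Dicke point $x_0=-x_1=1/\sqrt{6}$, $x_2=2/\sqrt{6}$, $x_3=0$, whose pair marginals all have $t=(2/3,2/3,-1/3)$, giving $\sum_k|t_{kk}|=5/3<\sqrt{3}$ even though these states are steerable (and are detected by the criterion the paper uses, since $|2/3|+|2/3|>\tfrac{4}{\pi}\sqrt{1-1/9}$). To repair your argument you must replace $F_3$ by a strictly stronger sufficient condition for steering of Bell-diagonal states, such as the steering-ellipsoid condition of Jevtic \emph{et al.} adopted in the paper's appendix, and then supply the explicit parameter region (or a single explicit point) where all three conditions hold simultaneously.
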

This can be seen in a straightforward way. If one can show that there exist two qubit reduced states which can demonstrate steering, this in turn implies that there exist states in the $\tau_{min}$ class with maximal persistency of steering. Upon loss of two particles the bipartite reduced states take the following forms:

\begin{equation*}
\rho_{i}^2 = |\eta_{i}^2\rangle\langle \eta_{i}^2|+ |\xi_{i}^2\rangle\langle \xi_{i}^2|+\sigma_x^{\otimes 2}|\eta_{i}^2\rangle\langle \eta_{i}^2|\sigma_x^{\otimes 2}+\sigma_x^{\otimes 2}|\xi_{i}^2\rangle\langle \xi_{i}^2|\sigma_x^{\otimes 2}   
\end{equation*}   
for $i=1,2,3$, where,  
\begin{eqnarray}
|\eta_{1}^2\rangle &=& \frac{x_0-x_1}{2}|00\rangle+ \frac{x_0+x_1}{2}|11\rangle\nonumber \\
|\eta_{2}^2\rangle &=& \frac{x_2+x_3}{2}|00\rangle+ \frac{x_0+x_1}{2}|11\rangle\nonumber \\
|\eta_{3}^2\rangle &=& \frac{x_2-x_3}{2}|00\rangle+ \frac{x_0+x_1}{2}|11\rangle\nonumber 
\end{eqnarray}
and 
\begin{eqnarray}
|\xi_{1}^2\rangle &=& \frac{x_2-x_3}{2}|01\rangle+ \frac{x_2+x_3}{2}|10\rangle\nonumber \\
|\xi_{2}^2\rangle &=& \frac{x_2+x_3}{2}|01\rangle+ \frac{x_0-x_1}{2}|10\rangle\nonumber \\
|\xi_{3}^2\rangle &=& \frac{x_2+x_3}{2}|01\rangle+ \frac{x_0-x_1}{2}|10\rangle\nonumber 
\end{eqnarray}

Now there exist states $\rho^4\equiv\{x_0,x_1,x_2,x_3\}$ such that $\rho_{i}^2$ is steerable for $i=1,2,3$ (see Appendix.(\ref{existsteer})). The existence of such states can be depicted in the parameter space as shown in Fig.(\ref{maxsteer}).
\begin{figure}[h!]\label{maxsteer} 
\includegraphics[scale=0.35]{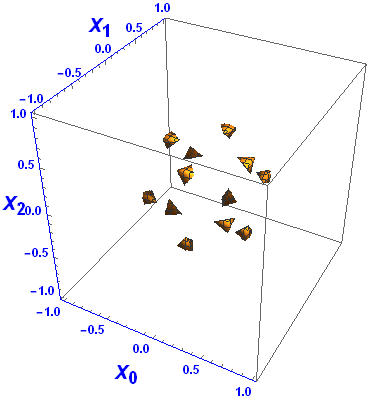}
\caption{This figure shows the states with $P_S=3$ in the parameter space $\{x_0,x_1,x_2\}$.}
\end{figure}  

\subsection{Achieving maximal persistency}
Let us now consider the cases of achieving maximal persistency of correlations. These are interesting since it signifies the robustness of the correlation under particle loss and the failure of monogamy of correlation between distant parties.  
\subsubsection{Maximal persistency of Genuine Nonlocality}
In \cite{Brunner12} the authors could not present any state with local dimension $2$ which has maximum persistency of nonlocality. This can partly be understood as the strength of monogamy principle for nonlocality\cite{fei}\cite{toner}. This implies that the demonstration of maximal persistency of genuine nonlocality will be harder. But there exist multipartite states with local dimension $2$ which can demonstrate maximal persistency of genuine nonlocality when local filtering is allowed. This is to say $P_{HGNL}$ for such states are maximal. Let us consider the following example:
\begin{example}
An $n$-partite state $|W^N\rangle$ is given by 
\begin{equation*}
|W^N\rangle=\frac{1}{\sqrt{N}}\left[ |0...01\rangle+|0...10\rangle+...+|10...0\rangle\right] 
\end{equation*} 
Now consider any reduced state of three parties obtained by loosing $(N-3)$ parties. These states are of the form:
\begin{equation}\label{rho3}
\rho(p)=p|W^3\rangle\langle W^3|+(1-p)|000\rangle\langle 000|
\end{equation}
where $p=\frac{3}{N}$. These reduced states do not demonstrate genuine nonlocality for two settings per site since they do not violate any of the 185 inequalities given in \cite{Bancal}. Thus persistency of genuine nonlocality for $W^N$ cannot be maximum. Now take the local filtering of the form
\[
\begin{bmatrix}
\epsilon & 0\\
0 & 1
\end{bmatrix}
\]
where $0\le \epsilon\le 1$. After local filtering the state becomes
\begin{equation}\label{rhop}
\rho(p,\epsilon)=\frac{p\epsilon^4}{p\epsilon^4+(1-p)\epsilon^6}|W^3\rangle\langle W^3|+\frac{(1-p)\epsilon^6}{p\epsilon^4+ (1-p)\epsilon^6}|000\rangle\langle 000|
\end{equation} 
Now consider the Bell quantity
\begin{eqnarray}
B_{16}=&&\langle A_0B_0\rangle+\langle A_1B_0\rangle+\langle A_0B_1\rangle-\langle A_1B_1\rangle-2\langle C_0\rangle\nonumber\\&&+\langle A_0B_0C_0\rangle+\langle A_1B_0C_0\rangle+\langle A_1B_1C_0\rangle\nonumber\\&&+2\langle A_1C_1\rangle+2\langle B_1C_1\rangle
\end{eqnarray}
where $B_{16}\le 4$ is the $16$-th facet inequality as given in \cite{Bancal}. The maximum value of $B_{16}$ obtainable from a state of the form (\ref{rhop}) is 
\begin{equation}
B_{16}(p,\epsilon)=\frac{p(4.72678)+2\epsilon^2 (p-1)}{\epsilon^2 (1-p)+p}
\end{equation}
By choosing $\epsilon\rightarrow 0$, this value can reach upto $4.72678$ for any value of $p$. This implies $P_{HGNL}(W^N)\ge (N-2)$. In \cite{Brunner12} it has already been shown that the bipartite reduced states of $|W^N\rangle$ exhibits hidden nonlocality. Thus one has $P_{HGNL}(W^N)=(N-1)$, i.e. maximal persistency of genuine correlation under local filtering.   
\end{example} 
But it would be more interesting to find the persistency of genuine nonlocality when local filtering is not considered. Since the three qubit reduced state $\rho(p)$ does not violate any of the 185 facets for detecting genuine nonlocality under two measurement settings for all parties. On the basis of this evidence we make the following conjecture      
\begin{conjecture}
$P_{GNL}(W^N)<N-2$ for all N-partite {\bf W} states with local dimension $2$. 
\end{conjecture}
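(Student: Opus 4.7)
The plan is to show, for every $N\geq 4$, that the three-qubit reduced state $\rho(p)=p|W^3\rangle\langle W^3|+(1-p)|000\rangle\langle 000|$ with $p=3/N$ fails to be genuinely $3$-way nonlocal in the sense of Eq.~(\ref{genuinenonsignonloc}). By the permutation symmetry of $|W^N\rangle$, every three-party reduction coincides with $\rho(3/N)$, so this single state controls the behaviour obtained after losing any $N-3$ parties, and the bound $P_{GNL}(|W^N\rangle)\leq N-3<N-2$ follows immediately. Since $|000\rangle\langle 000|$ produces fully local (hence $NS_2$-local) statistics under every measurement, and the set of $NS_2$-local behaviours is convex, it suffices to treat the worst case $p=3/4$ corresponding to $N=4$; every $N\geq 5$ then reduces to a convex combination of $\rho(3/4)$ and $|000\rangle\langle 000|$, and the conjecture propagates automatically.

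The first approach I would try is to establish biseparability of $\rho(3/4)$, since any biseparable state is automatically $NS_2$-local: each separable-across-a-cut term factorises as a single-party distribution times a bipartite quantum (hence non-signalling) distribution, which is precisely the structure of Eq.~(\ref{genuinenonsignonloc}). Exploiting the $S_3$ symmetry of the W state, I would search for a symmetric ansatz $\rho(3/4)=\tfrac{1}{3}(\sigma^{A|BC}+\sigma^{B|AC}+\sigma^{C|AB})$, with each $\sigma^{i|jk}$ parametrised as $\sum_{k}\mu_k|\alpha_k\rangle\langle\alpha_k|_i\otimes|\beta_k\rangle\langle\beta_k|_{jk}$, and solve the matching equations on the (low-dimensional) support of $\rho(3/4)$. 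If an analytic construction resists, the Jungnitsch--Moroder--G\"uhne SDP test for PPT-mixtures~\cite{guhne} gives a definitive numerical verdict.

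Should biseparability fail, the next step is to construct an $NS_2$-local model directly at the level of behaviours. The key observation is that the $|000\rangle\langle 000|$ component contributes, under every measurement, a deterministic product distribution that may be freely absorbed into any of the three bipartite non-signalling terms of the decomposition, effectively yielding a $\tfrac{1}{4}$-weight budget with which to compensate the correlations produced by $|W^3\rangle\langle W^3|$. Using the well-known rank-two bipartite marginals of $|W^3\rangle$ and again the permutation symmetry, one may attempt the symmetric splitting $P_{\rho(3/4)}(abc|xyz)=\tfrac{1}{3}\sum_{\text{cyc}}P_{NS}(ab|xy)P_L(c|z)$, and verify positivity of the bipartite non-signalling marginals $P_{NS}$ for all POVMs by a single symmetric calculation.

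The genuine difficulty is that $NS_2$-locality must be ruled out for \emph{all} POVMs, not merely for the $185$ genuine-nonlocality facets already tested in the excerpt, and that $NS_2$-models are permitted to exploit non-signalling resources (e.g.\ PR-boxes) which themselves violate standard Bell inequalities. Consequently one cannot reduce the task to showing locality of the bipartite marginals of $\rho(3/4)$. I expect this to be the principal technical bottleneck, and the cleanest path through it to be either a measurement-independent certificate such as biseparability or an NPA-type SDP hierarchy adapted to $NS_2$-locality, with the hope that already a low level of the hierarchy certifies non-violation at $p=3/4$.
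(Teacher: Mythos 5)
You should first note that the paper does not prove this statement at all: it is explicitly labelled a \emph{conjecture}, supported only by the numerical observation that the reduced state $\rho(p)=p|W^3\rangle\langle W^3|+(1-p)|000\rangle\langle 000|$ violates none of the $185$ two-setting facet inequalities of Bancal \emph{et al.} So there is no proof in the paper to compare against, and the question is only whether your attempt closes the gap. It does not. Your reductions are sound --- by permutation symmetry every three-party marginal of $|W^N\rangle$ equals $\rho(3/N)$, and by convexity of the $NS_2$-local set together with the $NS_2$-locality of $|000\rangle\langle 000|$ it suffices to handle $p=3/4$, which would give $P_{GNL}(W^N)\le N-3<N-2$ --- but the central claim, that $\rho(3/4)$ is $NS_2$-local for \emph{all} measurements, is never established; you only list strategies for establishing it.

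Worse, your primary strategy is provably dead on arrival. The paper shows (Sec.~on maximal persistency of genuine steering, via the three-setting inequality of Appendix~C) that $\rho(3/4)$ violates a genuine steering inequality, hence is genuinely tripartite entangled; equivalently $P_{GE}(W^4)=P_{GS}(W^4)=3$. A biseparable state cannot be genuinely steerable, so no decomposition $\rho(3/4)=\tfrac{1}{3}(\sigma^{A|BC}+\sigma^{B|AC}+\sigma^{C|AB})$ exists and the Jungnitsch--Moroder--G\"uhne test would simply confirm this. You are therefore forced into your fallback routes --- an explicit behaviour-level $NS_2$ model or an NPA-type hierarchy adapted to $NS_2$-locality --- and you correctly identify why these are hard (the bipartite terms in Eq.~(\ref{genuinenonsignonloc}) may be arbitrary non-signalling distributions, and general POVMs with arbitrarily many settings must be covered), but you do not carry either out. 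As it stands the proposal is a research plan whose only concrete step fails, and the conjecture remains open.
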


In the next subsection we ask the question whether a weaker form of genuine nonlocality namely, genuine quantum steering can achieve maximal persistency.     
\subsubsection{Maximal persistency of Genuine Steering}

Here we present a 4 qubit state which exhibits maximum
persistency of genuine steering. We present our argument below.
Let us consider the state $|W^4\rangle$. Remember that this state does not achieve maximal persistency of nonlocality or genuine nonlocality. Upon loss of one particle the three qubit reduced state take the form (\ref{rho3}), where $p=\frac{3}{4}$. This state violates the following 3-setting genuine steering inequality(see Appendix(\ref{gensteer})):
\begin{equation}
|\langle D_0C_0\rangle+\langle D_1C_1\rangle+\langle D_2C_2\rangle|\le 3
\end{equation}   
where,
\begin{eqnarray}
D_0&=& A_0B_0+A_1B_1+A_2B_2\\
D_1&=& A_0B_2-A_1B_0+A_2B_1\\
D_2&=& A_0B_1-A_1B_2+A_2B_0
\end{eqnarray}
Thus one has $P_{GS}(W^4)\ge 2$. Now for loss of two parties the two qubit reduced state of $|W^4\rangle$ is of the form:
\begin{equation}
\rho^2(p)=p|W^2\rangle\langle W^2|+(1-p)|00\rangle\langle 00|
\end{equation}
where, $p=\frac{1}{2}$. We know that this state has a local model under projective measurements\cite{Tran14}. But nonetheless this state exhibits steering because it violates a sufficient criterion\cite{Jevtic15} for steering. Hence $|W^4\rangle$ has maximum persistency of genuine steering, i.e. $P_{GS}(W^4)=3$. This trivially implies the persistency of steering of $|W^4\rangle$ state is also maximum.

\subsubsection{Maximal persistency of Genuine Entanglement}
Now let us come to the question of maximum persistency of genuine entanglement. Let us consider the 4 qubit $\tau_{min}$ class of states. States belonging to this class will have maximum persistency of genuine entanglement i.e. $P_{GE}=3$ under the following conditions:
\begin{condition}
$P_{GE}>1$ and $S_i>0$ for $i=1,2,3$
\end{condition}
where,
\begin{eqnarray}
S_1&=& 2\max \{|-\frac{1}{2}(x_0+x_1)(-x_2+x_3)|-\frac{1}{4}((x_0-x_1)^2+(x_2+x_3)^2),\nonumber\\
&&|\frac{1}{2}(x_0-x_1)(x_2+x_3)|-\frac{1}{4}((x_0+x_1)^2+(x_2-x_3)^2)\}\nonumber \\ 
S_2&=& 2\max \{|\frac{1}{2}(x_0+x_1)(x_2+x_3)|-\frac{1}{4}((x_0-x_1)^2+(x_2-x_3)^2),\nonumber\\
&&|-\frac{1}{2}(x_0-x_1)(-x_2+x_3)|-\frac{1}{4}((x_0+x_1)^2+(x_2+x_3)^2)\}\nonumber \\
S_3&=& 2\max \{|\frac{1}{2}(x_2+x_3)(x_2-x_3)|-\frac{1}{4}((x_0-x_1)^2+(x_0+x_1)^2),\nonumber\\
&&|\frac{1}{2}(x_0+x_1)(x_0-x_1)|-\frac{1}{4}((x_2+x_3)^2+(x_2-x_3)^2)\}\nonumber  
\end{eqnarray}
We depict the states with maximum persistency of genuine entanglement in the $\tau_{min}$ class in the parameter space in Fig.(\ref{maxent}).
\begin{figure}[h!]\label{maxent} 
\includegraphics[scale=0.35]{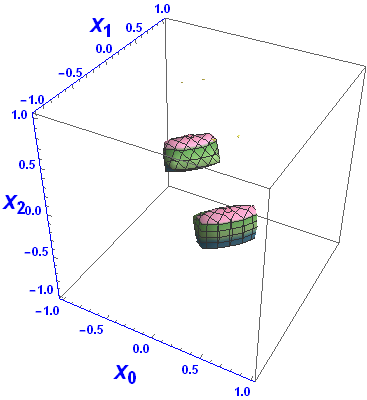}
\caption{This figure shows the states with $P_{GE}=3$ in the parameter space $\{x_0,x_1,x_2\}$.}
\end{figure}  
For example, the 4-qubit Dicke state $|D^4\rangle$ \cite{Dicke} belongs to the $\tau_{min}$ class and it satisfies all the above conditions and hence $P_{GE}(D^4)=3$.

The conditions for states in $\tau_{min}$ class to have maximum persistency of entanglement are the following:
\begin{condition}
$S_i>0$ for $i=1,2,3$
\end{condition}
The states with $P_E=3$ belonging to the $\tau_{min}$ class has been shown in Fig.(\ref{maxe}) 
\begin{figure}[h!]\label{maxe} 
\includegraphics[scale=0.35]{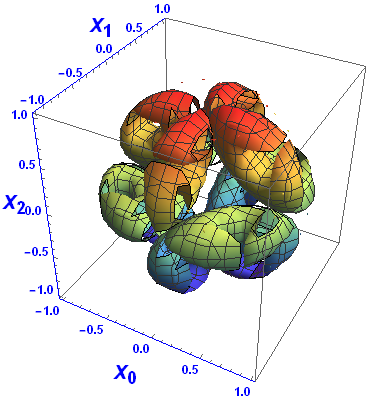}
\caption{This figure shows the states with $P_{E}=3$ in the parameter space $\{x_0,x_1,x_2\}$.}
\end{figure}

\section{Discussions}\label{sec4} 
In a couple of recent studies\cite{Raussendorf2001}\cite{Brunner12} the concept of persistency of entanglement and nonlocality were introduced. This new concept is fundamental to the understanding of quantum correlations and at the same time important from practical perspective since it deals with the scenario where information about some of the parties can be completely lost. 

Besides defining the same notion for quantum steering, our work extends the concept of persistency to genuine correlations which are inherently multipartite in nature. We also discuss the possibility of achieving maximum persistency of genuine correlations with several important classes of multipartite states. As we have emphasized in the subsequent sections that maximum persistency of correlation becomes indispensable in certain multi-party quantum cryptography protocols. Now we point out some of the questions which this works leaves open. A thorough understanding of the persistency of correlation for the four qubit states can enable one to classify the whole class of four qubit states in terms of persistency. Moreover, one could also extend the study of persistency of correlation for multi-party systems of higher dimension($>2$). Another interesting question is to find out multi-party qubit states which have maximal persistency of genuine nonlocality.           
	
\emph{Acknowledgement}:
We would like to thank Prof. Guruprasad Kar for useful discussions. AM acknowledge support from the CSIR project 09/093(0148)/2012-EMR-I.
\appendix
\section{Violation of 4-th facet}\label{4thfacet} 
Maximum violation value of each reduced state for $4$-th facet\cite{Paul14} :
\begin{eqnarray}\label{A1}
B_{Max}(\rho_1^{3}) = && \max[2 \sqrt{{((-1 + 2 x_1^2))^2 + ((-1 + 2 x_0^2) )^2}},\nonumber\\
 && 2 \sqrt{{((-1 + 2 x_0^2 + 2 x_1^2))^2 + ((1 - 2 x_0^2) )^2}},\nonumber\\
&& 2 \sqrt{{((-1 + 2 x_0^2 + 2 x_1^2))^2 + ((-1 + 2 x_1^2) )^2}}]
 \end{eqnarray} 
\begin{eqnarray}\label{A2} 
 B_{Max}(\rho_2^{3}) = && \max[2 \sqrt{{((1 - 2 x_0^2  - 2 x_2^2 ))^2 + ((1 - 2 x_1^2  - 2 x_2^2 ))^2}}\nonumber\\,
  && 2 \sqrt{{((1 - 2 x_1^2 - 2 x_0^2 ))^2 + ((1 - 2 x_2^2 - 2 x_0^2 ))^2}},\nonumber\\
 && 2 \sqrt{{((1 - 2 x_0^2  - 2 x_1^2 ))^2 + ((1 - 2 x_2^2  - 2 x_1^2 ))^2}}]
\end{eqnarray}

\begin{eqnarray}\label{A3} 
B_{Max}(\rho_3^{3}) = && \max[4 \sqrt{{(x_0 x_2 + x_1 x_3)^2 + (x_1 x_2 + x_0 x_3)^2}},\nonumber\\ 
&& 4 \sqrt{{(x_0 x_1 + x_2 x_3)^2 + (x_1 x_2 + x_0 x_3)^2}},\nonumber\\ 
&& 4 \sqrt{{(x_0 x_2 + x_1 x_3)^2 + (x_1 x_0 + x_2 x_3)^2}}]
\end{eqnarray}

\begin{eqnarray}\label{A4}      
B_{Max}(\rho_3^{4}) = && \max[4 \sqrt{{(x_0 x_1 - x_2 x_3)^2 + (x_0 x_2 - x_1 x_3)^2}},\nonumber\\
 && 4 \sqrt{((x_0 x_2 - x_1 x_3))^2 + (x_1 x_2 - x_0 x_3)^2},\nonumber\\
 && 4 \sqrt{((x_1 x_2 - x_0 x_3))^2 + (x_1 x_0 - x_2 x_3)^2}]
\end{eqnarray}
     
 All these reduced states exhibit nonlocality only when each of $\rho_i^{3} > 2 (i= 1,2,3,4)$. It is impossible that all reduced states violate $4$-th facet inequality. Hence the result $P_{NL} = 1$.

\section{Existence of maximally persistent steerable states}\label{existsteer}
The following conditions on the state parameters are obtained from the conditions given in \cite{Jevtic15}
 \begin{itemize}
   \item Steering condition of $\rho_1^{2}$: 
   \begin{eqnarray} 
   S(\rho_1^{2}):= && \max[|-1 + 2 x_0^2 + 2 x_2^2| + |-1 + 2 x_1^2 + 2 x_2^2|\nonumber\\ &&-\frac{2}{\pi} (\sqrt{1 - (-1 + 2 x_0^2 + 2x_1^2)^2}+\sqrt{1 - (-1 + 2 x_0^2 + 2 x_1^2)^2}),\nonumber\\&&|-1 + 2 x_0^2 + 2 x_2^2| + |-1 + 2 x_0^2 + 2 x_1^2|\nonumber\\ &&-\frac{2}{\pi} (2 \sqrt{(1)^2 - (-1 + 2 x_1^2 + 2 x_2^2)^2}),\nonumber\\&&|-1 + 2 x_1^2 + 2 x_2^2| + |-1 + 2 x_0^2 + 2 x_1^2|\nonumber\\ &&-\frac{2}{\pi} (2 \sqrt{1 - (-1 + 2 x_0^2 + 2 x_2^2)^2})]
   \end{eqnarray} 
   \item Steering condition of $\rho_2^{2}$:
   \begin{eqnarray}
   S(\rho_2^{2}):=&&\max[|2 (x_0 x_2 + x_1 x_3)| +|-2 (x_1 x_2 + x_0 x_3)\nonumber\\ &&-\frac{2}{\pi} (\sqrt{1 - (2 (x_0 x_1 + x_2 x_3))^2} + \sqrt{1 - (2 (x_0 x_1 +x_2 x_3))^2}),\nonumber\\&&|2 (x_0 x_2 - x_1 x_3)| + |2 (x_0 x_1 + x_2 x_3)|\nonumber\\ &&-\frac{2}{\pi}  (2 \sqrt{(1)^2 - (-2 (x_1 x_2 + x_0 x_3))^2}),\nonumber\\&&|-2 (x_1 x_2 + x_0 x_3)| +|2 (x_0 x_1 + x_2 x_3)|\nonumber\\ &&-\frac{2}{\pi}  (2 \sqrt{1 - (2 (x_0 x_2 + x_1 x_3))^2})]
   \end{eqnarray}          
   \item Steering condition of $\rho_3^{2}$: 
   \begin{eqnarray}
   S(\rho_3^{2}):=&&\max[|2 x_0 x_2 - 2 x_1 x_3| +|-2 x_1 x_2 + 2 x_0 x_3|\nonumber\\ &&-\frac{2}{\pi} (\sqrt{1 - (2 x_0 x_1 -2 x_2 x_3)^2} +
\sqrt{1 - (2 x_0 x_1 -2 x_2 x_3)^2}),\nonumber\\&&|2 x_0 x_2 - 2 x_1 x_3| +|2 x_0 x_1 - 2 x_2 x_3|\nonumber\\ &&-\frac{2}{\pi} (2 \sqrt{1 - (-2 x_1 x_2 + 2 x_0 x_3)^2}),\nonumber\\&&|-2 x_1 x_2 + 2 x_0 x_3| +|2 x_0 x_1 - 2 x_2 x_3|\nonumber\\ &&-\frac{2}{\pi} (2 \sqrt{1 - (2 x_0 x_2 -2 x_1x_3)^2})]
 \end{eqnarray} 
 \end{itemize}
All of these three reduced states exhibit steering if $S(\rho_1^{2})> 0$, $S(\rho_2^{2}) > 0$, and $S(\rho_3^{2}) > 0$.

\section{A genuine steering inequality for three settings per site}\label{gensteer} 
\begin{theorem}\label{thmappen}
If any given quantum correlation violates the steering inequality:
\begin{eqnarray}\label{1}
|\langle (A_0 B_0+A_1B_1&+& A_2B_2)C_0\rangle+\langle (A_0 B_2-A_1B_0+A_2B_1)C_1\rangle\nonumber\\
&+&\langle (A_0 B_1-A_1B_2+A_2B_0)C_2\rangle|\leq 3,
\end{eqnarray}
then the correlation exhibits genuine tripartite steering from Charlie to Alice and Bob. Here measurements of Alice and Bob demonstrate EPR steering without Bell nonlocality  while  measurements of Charlie are uncharacterized.
\end{theorem}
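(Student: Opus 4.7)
The plan is to prove the contrapositive: any correlation $P(abc|XYZ)$ admitting an NLHS decomposition as in Eq.~(\ref{NLHS}) satisfies $|\mathcal{I}|\le 3$, where $\mathcal{I}:=\langle D_0C_0\rangle+\langle D_1C_1\rangle+\langle D_2C_2\rangle$ and $D_0,D_1,D_2$ are the bilinear Alice--Bob operators in the theorem statement. Writing $D_k=\sum_{ij}(M_k)_{ij}\,A_i\otimes B_j$, a direct calculation shows that $M_0=I$, $M_1$, $M_2$ are three $3\times 3$ orthogonal matrices that are also pairwise Frobenius-orthogonal: $M_k M_k^T=I$ and $\mathrm{tr}(M_k^T M_l)=3\,\delta_{kl}$. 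This algebraic structure is the key input on which both bounds below rest.

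By the triangle inequality, combined with $\sum_\lambda p_\lambda+\sum_\lambda q_\lambda=1$, it suffices to bound the $\lambda$-wise contribution of each summand of the NLHS model by $3$. For the fully local term $P(a|X,\rho_A^\lambda)P(b|Y,\rho_B^\lambda)P_\lambda(c|Z)$, the single-qubit states $\rho_A^\lambda, \rho_B^\lambda$ yield Bloch vectors $\vec a_\lambda,\vec b_\lambda$ with $|\vec a_\lambda|, |\vec b_\lambda|\le 1$, and Charlie's uncharacterised expectations satisfy $|c_k^\lambda|\le 1$. Then $\langle D_k\rangle_\lambda=\vec a_\lambda^{\,T}M_k\vec b_\lambda$, so using $|M_k\vec b_\lambda|=|\vec b_\lambda|$ together with Cauchy--Schwarz gives $|\langle D_k\rangle_\lambda|\le|\vec a_\lambda||\vec b_\lambda|\le 1$; summing over $k$ yields $\bigl|\sum_k\langle D_k\rangle_\lambda c_k^\lambda\bigr|\le 3|\vec a_\lambda||\vec b_\lambda|\le 3$.

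For the first summand, $\rho_{AB}^\lambda$ is a two-qubit state with correlation matrix $T^\lambda_{ij}=\langle A_i\otimes B_j\rangle_{\rho_{AB}^\lambda}$, and $\langle D_k\rangle_\lambda=\mathrm{tr}(M_k^T T^\lambda)$. The core technical step is
\[
\sum_k\bigl|\mathrm{tr}(M_k^T T^\lambda)\bigr|\le 3 \quad\text{for every two-qubit }\rho_{AB}^\lambda.
\]
My plan is to bring $T^\lambda$ to Schmidt-canonical form $T^\lambda=U_A\,\mathrm{diag}(s,-s,1)\,U_B^T$ with $s\in[0,1]$ via local rotations on $A$ and $B$, reducing the claim to $\sum_k |sW_{k,11}-sW_{k,22}+W_{k,33}|\le 3$, where $W_k:=U_B^T M_k^T U_A\in O(3)$ remain pairwise Frobenius-orthogonal. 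The main obstacle is precisely this last bound: a naive Bessel/Cauchy--Schwarz argument yields only $3\sqrt{2s^2+1}$, which exceeds $3$ for $s>0$. The tighter estimate of $3$ requires genuinely using that each $W_k$ is an honest orthogonal matrix (so that $\sum_i W_{k,ij}^2=1$ on each column), and not merely that $\{W_k\}$ span a 3-dimensional subspace in Frobenius inner product; the sign pattern of $\mathrm{diag}(s,-s,1)$ also enters in an essential way, as one verifies that the extremum is attained at configurations where $\Sigma$ lies along one of the $W_k$ directions and the other two traces vanish by Frobenius orthogonality. Combining the two term-wise bounds and taking the convex combination over $\lambda$ delivers $|\mathcal{I}|\le 3$, and the contrapositive establishes genuine tripartite steering from Charlie to Alice and Bob.
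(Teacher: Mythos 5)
Your proposal follows the same architecture as the paper's own proof: argue the contrapositive over the NLHS model of Eq.~(\ref{NLHS}), bound the fully product term by Bloch vectors plus Cauchy--Schwarz (this part reproduces the paper's Lemma and is fine, modulo the shared implicit assumption that each party's three measurement directions form an orthonormal triad), and then bound the term in which Alice and Bob share a genuine two--qubit state $\rho_{AB}^\lambda$. The gap is in that second term, and you have located it precisely yourself: everything reduces to the claim $\sum_{k}|\mathrm{tr}(M_k^{T}T)|\le 3$ for every admissible correlation matrix $T$, and your proposal never proves it --- ``one verifies that the extremum is attained at\dots'' is an announcement, not an argument.

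Worse, the inequality you would need is false, so no refinement of the extremal analysis can close the gap. Your $M_0,M_1,M_2$ are indeed pairwise Frobenius-orthogonal signed permutation matrices; set $S=M_0+M_1+M_2$, which has rows $(1,1,1)$, $(-1,1,-1)$, $(1,1,1)$. Then $S^{T}S$ has characteristic polynomial $\mu^3-9\mu^2+16\mu$, so the singular values of $S$ satisfy $\sigma_1^2+\sigma_2^2=9$, $\sigma_1\sigma_2=4$, $\sigma_3=0$, giving nuclear norm $\sigma_1+\sigma_2=\sqrt{17}$. Taking $T$ to be the orthogonal factor in the polar decomposition of $S$ (the vanishing singular value lets you fix $\det T=-1$, so $T$ is exactly the correlation matrix $\langle A_iB_j\rangle$ of a maximally entangled two--qubit state measured with orthonormal triads) yields $\sum_{k}\mathrm{tr}(M_k^{T}T)=\sqrt{17}\approx 4.12>3$; with Charlie deterministic this is a manifestly NLHS correlation that violates the inequality. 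Note that the paper's printed proof founders at exactly the same spot: its ``Case 2'' merely asserts that ``the algebraic maximum of the game is $3$'' and exhibits one configuration attaining $3$ with $|\psi^{+}\rangle$, which is a lower bound on the quantum maximum, not the upper bound the argument needs. So your attempt is no less rigorous than the paper's, but neither completes the proof, and the computation above indicates that the bound of $3$ cannot be salvaged for the two-qubit branch of the NLHS model without restricting that branch (e.g.\ to unsteerable or separable $\rho_{AB}^{\lambda}$, in which case your own Bloch-vector bound already suffices).
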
 
Before proving the theorem, we first prove  the following lemma:\\
\begin{lemma} 
Any LHS-LHS model satisfies the following inequality:
\begin{equation}\label{2}
    |\langle A_0 B_0+A_1B_1+A_2B_2\rangle|\leq 1.
\end{equation} 
\end{lemma}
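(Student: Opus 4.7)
The plan is to reduce the claimed bound to a Cauchy--Schwarz estimate on Bloch-type response vectors built from each hidden state, and then collapse the convex combination over $\lambda$ by the triangle inequality. First I would unfold the LHS--LHS assumption: under it every correlator factorises as
\begin{equation*}
\langle A_iB_i\rangle = \sum_\lambda q_\lambda\, a_i^\lambda\, b_i^\lambda, \qquad a_i^\lambda := \text{tr}(A_i\,\rho_A^\lambda),\;\; b_i^\lambda := \text{tr}(B_i\,\rho_B^\lambda),
\end{equation*}
so that, with the three-vectors $\vec a^\lambda=(a_0^\lambda,a_1^\lambda,a_2^\lambda)$ and $\vec b^\lambda=(b_0^\lambda,b_1^\lambda,b_2^\lambda)$, the left-hand side of \eqref{2} becomes $\bigl|\sum_\lambda q_\lambda\,\vec a^\lambda\!\cdot\!\vec b^\lambda\bigr|$.

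The second step supplies the geometric input $\|\vec a^\lambda\|_2\leq 1$ (and analogously for $\vec b^\lambda$). The steering setup of Theorem~\ref{thmappen} takes $A_0,A_1,A_2$ to be three mutually anti-commuting dichotomic qubit observables, so for any unit vector $\vec c\in\mathbb{R}^3$ the operator $\sum_i c_i A_i$ squares to $(\sum_i c_i^2)\,\mathbb{1}=\mathbb{1}$ and therefore has operator norm one. Hence $|\vec c\cdot\vec a^\lambda|\leq 1$ for every unit $\vec c$; choosing $\vec c=\vec a^\lambda/\|\vec a^\lambda\|_2$ yields $\|\vec a^\lambda\|_2\leq 1$, and the same argument applies to $\vec b^\lambda$.

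Finally, I would combine Cauchy--Schwarz with the triangle inequality to obtain
\begin{equation*}
\Bigl|\sum_\lambda q_\lambda\,\vec a^\lambda\!\cdot\!\vec b^\lambda\Bigr| \leq \sum_\lambda q_\lambda\,\|\vec a^\lambda\|_2\,\|\vec b^\lambda\|_2 \leq \sum_\lambda q_\lambda = 1,
\end{equation*}
which is exactly \eqref{2}. The main subtlety lies not in any computation but in the second step: the sharp bound $1$, as opposed to the crude $\sqrt{3}$ that $|a_i^\lambda|,|b_i^\lambda|\leq 1$ alone would give, crucially exploits the Pauli-type anti-commutation of the $A_i$ (and $B_i$). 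Without that structure the lemma would fail, which is also precisely why the matching quantum violation in Theorem~\ref{thmappen} is engineered using exactly those measurements.
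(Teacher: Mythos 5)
Your proposal is correct and follows essentially the same route as the paper: both reduce the bound to a Cauchy--Schwarz estimate on the three-vector of single-party expectation values and then invoke the fact that this vector has Euclidean norm at most $1$. The only difference is cosmetic --- the paper obtains the norm bound by writing $\langle A_i\rangle=\overrightarrow{n_i}\cdot\overrightarrow{r}$ and simplifying $\sqrt{\sum_i(\overrightarrow{n_i}\cdot\overrightarrow{r})^2}=|\overrightarrow{r}|\leq 1$ (which tacitly assumes the three measurement directions are orthonormal), whereas you derive the same bound from the anti-commutation of the observables via an operator-norm argument, making that hidden hypothesis explicit; you also spell out the convex combination over $\lambda$ that the paper dispatches with a linearity remark.
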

\begin{proof}
Let us denote
\begin{equation}\label{2i}
    S_3=\langle A_0 B_0+A_1B_1+A_2B_2\rangle
\end{equation}
For any separable state (due to linearity of the quantity $S_3$, without loss of generality one can consider product states $\rho_{AB}=\rho_A\bigotimes\rho_B$ only for this purpose),
\begin{equation}\label{2ii}
    |S_3|\leq |\overrightarrow{v_{A}}.\overrightarrow{v_{B}}|,
\end{equation}
where $\overrightarrow{v_{A/B}}=(\langle A_0/B_0\rangle,\langle A_1/B_1\rangle,\langle A_2/B_2\rangle)$. By Cauchy Schwarz inequality, we get,
\begin{equation}\label{2ii}
    |S_3|\leq |\overrightarrow{v_{A}}||\overrightarrow{v_{B}}|,
\end{equation}
Now,
\begin{equation}\label{2iii}
    |\overrightarrow{v_{A}}|=\sqrt{\sum_{i=0}^2\langle A_i\rangle^2}
\end{equation}
Again,
\begin{equation}\label{2iv}
  \langle A_i\rangle=\textmd{Tr}(A_i \rho_A),
\end{equation}
where $\rho_A=\textmd{Tr}_B(\rho_{AB})$. After simple calculation, we get,
\begin{equation}\label{2v}
      \langle A_i\rangle=\overrightarrow{n_i}.\overrightarrow{r},
\end{equation}
where $\overrightarrow{r}$ denotes the Bloch vector corresponding to the state $\rho_A$ and $\overrightarrow{n_i}$ characterizes the measurement setting $A_i=\overrightarrow{n_i}.\overrightarrow{\sigma}.$  Using this relation(Eq.(\ref{2v})), Eq.(\ref{2iii}) gets simplified,
\begin{equation}\label{2vi}
    |\overrightarrow{v_{A}}|=\sqrt{\sum_{i=0}^2(\overrightarrow{n_i}.\overrightarrow{r})^2}.
\end{equation}
This on simplification becomes,
\begin{equation}\label{2vi}
    |\overrightarrow{v_{A}}|=|\overrightarrow{r}|\leq 1.
\end{equation}
Similarly it can be shown that $|\overrightarrow{v_{B}}|\leq 1.$ Hence Eq.(\ref{2ii}) becomes,
\begin{equation}\label{2vii}
    |S_3|\leq 1.
\end{equation}
\end{proof} 
\begin{proof} \textit{of Theorem}: 
Before we start with the proof we introduce the following notations:
\begin{equation}\label{m1}
    D_0=\langle A_0 B_0+A_1B_1+A_2B_2\rangle,
\end{equation}
where $A_i=\overrightarrow{v_i^A}.\overrightarrow{\sigma},$ $B_i=\overrightarrow{v_i^B}.\overrightarrow{\sigma}$ and $C_i=\overrightarrow{v_i^C}.\overrightarrow{\sigma},$  with $\overrightarrow{\sigma}=(\sigma_x,\sigma_y,\sigma_z)$ denote the Pauli observables.
\begin{equation}\label{m2}
  D_1= \langle A_0 B_2-A_1B_0+A_2B_1\rangle
\end{equation}
\begin{equation}\label{m3}
 D_2=\langle A_0 B_1-A_1B_2+A_2B_0\rangle
\end{equation}
The last two expressions $ D_1$(Eq.(\ref{m2})) and $ D_2$(Eq.(\ref{m3})) can be obtained from $D_0$(Eq.(\ref{m1})) under some specific relabeling of inputs and outputs:
\begin{itemize}
  \item For  $D_1$: $a\rightarrow a\bigoplus_2 x$, where $a\in\{0,1\}$ and $x\in\{0,1,2\}$ and $y \rightarrow y\bigoplus_3 2$ where $y\in\{0,1,2\}$. $\bigoplus_j$ denotes addition modulo $j$ for any positive integer $j$.
  \item For  $D_2$: $a\rightarrow a\bigoplus_2 x$ and $y \rightarrow y\bigoplus_3 1$.
\end{itemize}
With these notations, Eq.(\ref{1}) becomes modified as:
\begin{equation}\label{t}
     |\sum_{i=0}^2 \langle D_iC_i\rangle|\leq 3.
\end{equation}
Now, as Alice, Bob and Charlie are not in the same lab, Alice and Bob do not know which version of the game to play. So they play the average game $\sum_{i=0}^2 D_i$. Now there are two possible cases:
\begin{itemize}
  \item Alice and Bob share a separable state
  \item Alice and Bob share a EPR-steerable state
\end{itemize}
\begin{itemize} 
\item $\textit{Case 1:}$ Let correlations of Alice and Bob admit a $LHS-LHS$ model, i.e. they share a separable state. Then, by the lemma, we get, $D_i\leq 1$, $\forall i\in\{0,1,2\}$. Hence Eq.(\ref{t}) is satisfied.
\item $\textit{Case 2:}$ Now consider the case where the correlations do not admit a $LHS-LHV$ model(i.e. if the state is EPR-steerable). By quantum predictions, the algebraic maximum of the game is $3.$ For instance if Alice and Bob share the entangled state $|\psi^+\rangle$, then for a particular measurement settings $D_0=3$ whereas both of $D_1$ and $D_2=0$. Hence the theorem.
\end{itemize}  
\end{proof}




\begin{thebibliography}{99}
	
\bibitem{Reviw Ent} R. Horodecki, P. Horodecki, M. Horodecki, and K. Horodecki, "Quantum entanglement", \href{http://journals.aps.org/rmp/abstract/10.1103/RevModPhys.81.865}{Rev. Mod. Phys. {\bf 81}, 865 (2009)}

\bibitem{Wiseman07} H. M. Wiseman, S. J. Jones, and A. C. Doherty, ``Steering, Entanglement, Nonlocality, and the Einstein-Podolsky-Rosen Paradox'', \href{http://journals.aps.org/prl/abstract/10.1103/PhysRevLett.98.140402}{Phys. Rev. Lett. 98, 140402}, S. J. Jones, H. M. Wiseman, and A. C. Doherty, ``Entanglement, Einstein-Podolsky-Rosen correlations, Bell nonlocality, and steering'', \href{http://journals.aps.org/pra/abstract/10.1103/PhysRevA.76.052116}{Phys. Rev. A 76, 052116 (2007)}

\bibitem{Reviw NL} N. Brunner, D. Cavalcanti, S. Pironio, V. Scarani, and S.Wehner, "Bell nonlocality",
\href{http://journals.aps.org/rmp/abstract/10.1103/RevModPhys.86.419}{Rev. Mod. Phys. {\bf 86}, 839 (2014)}. 	
	
\bibitem{Bell64} J. S. Bell, "On the Einstein Podolsky Rosen Paradox", Physics {\bf 1} (3): 195–200 (1964), J. S. Bell, Speakable and Unspeakable in Quantum Mechanics (Cambridge University Press, 1987).

\bibitem{Bennett93} C. H. Bennett, G. Brassard, C. Cr\'{e}peau, R. Jozsa, A. Peres, and W. K. Wootters, "Teleporting an unknown quantum state via dual classical and Einstein-Podolsky-Rosen channels", \href{http://dx.doi.org/10.1103/PhysRevLett.70.1895}{Phys. Rev. Lett. {\bf 70}, 1895 (1993)}. 

\bibitem{Bennett92}  C. H. Bennett, S. J. Wiesner, "Communication via one- and two-particle operators on Einstein-Podolsky-Rosen states", \href{http://dx.doi.org/10.1103/PhysRevLett.69.2881}{Phys. Rev. Lett. 69, 2881 (1992)}. 

\bibitem{random} S. Pironio, A. Ac\'{i}n, S. Massar, A. Boyer de la Giroday, D. N. Matsukevich, P. Maunz, S. Olmschenk, D. Hayes, L. Luo, T. A. Manning and C. Monroe, " Random numbers certified by Bell’s theorem", \href{http://www.nature.com/nature/journal/v464/n7291/full/nature09008.html}{Nature {\bf 464}, 1021-1024}. 
R. Colbeck and R. Renner, ``Free randomness can be amplified", \href{http://www.nature.com/nphys/journal/v8/n6/full/nphys2300.html}{Nat. Phys.{\bf 8}, 450 (2012)};
A. Chaturvedi and M. Banik, ``Measurement-device–independent randomness from local entangled states",
\href{http://iopscience.iop.org/article/10.1209/
	0295-5075/112/30003/meta;jsessionid=D6C96ABB3E61
	C42C542A9553E8A4F4DC.c3.iopscience.cld.iop.org}{EPL {\bf 112}, 30003 (2015)}.

\bibitem{key} J. Barrett, L. Hardy, and A. Kent, ``No signaling and quantum key distribution", \href{http://dx.doi.org/10.1103/PhysRevLett.95.010503}{Phys. Rev. Lett. {\bf 95}, 010503 (2005)};
 A. Ac\'{i}n, N. Gisin, and L. Masanes, ``From Bells theorem to secure quantum key distribution", \href{http://dx.doi.org/10.1103/PhysRevLett.97.120405}{Phys. Rev. Lett. {\bf 97}, 120405 (2006)};
 
\bibitem{dw} N. Brunner, S. Pironio, A. Ac\'{i}n, N. Gisin, A. A. Methot, and	V. Scarani, ``Testing the dimension of Hilbert spaces", \href{http://dx.doi.org/10.1103/PhysRevLett.100.210503}{Phys. Rev. Lett. {\bf 100}, 210503 (2008)};
R. Gallego, N. Brunner, C. Hadley, and A. Ac\'{i}n, ``Device independent tests of classical and quantum dimensions",  \href{http://dx.doi.org/10.1103/PhysRevLett.105.230501}{Phys. Rev. Lett. {\bf 105}, 230501 (2010)};
S. Das, M. Banik, A. Rai, MD R. Gazi, and S.Kunkri, ``Hardy's nonlocality argument as a witness for postquantum correlations",
\href{https://journals.aps.org/pra/abstract/10.1103/PhysRevA.87.012112}{Phys. Rev. A {\bf 87}, 012112 (2013)};
A. Mukherjee, A. Roy, S. S. Bhattacharya, S. Das, Md. R. Gazi, and M. Banik, ``Hardy's test as a device-independent dimension witness",
\href{https://journals.aps.org/pra/abstract/10.1103/PhysRevA.92.022302}{Phys. Rev. A {\bf 92}, 022302 (2015)};

\bibitem{game} N. Brunner and N. Linden, ``Connection between Bell nonlocality and Bayesian game theory", \href{http://www.nature.com/ncomms/2013/130703/ncomms3057/full/ncomms3057.html#references}{Nature Communications {\bf 4}, 2057 (2013)}.
A. Pappa \emph{et al.} ``Nonlocality and Conflicting Interest Games", 
\href{http://journals.aps.org/prl/abstract/10.1103/PhysRevLett.114.020401}{Phys. Rev. Lett. {\bf 114}, 020401 (2015)}.
A. Roy, A. Mukherjee, T. Guha, S. Ghosh, S. S. Bhattacharya, M. Banik, ``Nonlocal correlations: Fair and Unfair Strategies in Bayesian Game'', \href{http://arxiv.org/abs/1601.02349}{Arxiv: 1601.02349}.



\bibitem{Branciard12} C. Branciard, E. G. Cavalcanti, S. P. Walborn, V. Scarani, and H. M. Wiseman, ``One-sided device-independent quantum key distribution: Security, feasibility, and the connection with steering'', \href{http://dx.doi.org/10.1103/PhysRevA.85.010301}{Phys. Rev. A {\bf 85}, 010301(R) (2012)}.

\bibitem{Quintino15}  M. T. Quintino, T. V\'{e}rtesi, D. Cavalcanti, R. Augusiak, M. Demianowicz, A. Ac\'{i}n, and N. Brunner, ``Inequivalence of entanglement, steering, and Bell nonlocality for general measurements'', \href{http://dx.doi.org/10.1103/PhysRevA.92.032107}{Phys. Rev. A {\bf 92}, 032107 (2015)}.

\bibitem{gme} Otfried G\"{u}hne, Ge\'{z}a T\'{o}th, ``Entanglement detection'', \href{http://www.sciencedirect.com/science/article/pii/S0370157309000623}{Physics Reports 474, 1}.

\bibitem{comp} R. Raussendorf and H. J. Briegel, Phys. Rev. Lett. 86, 5188 (2001).

\bibitem{sim} S. Lloyd,Science 273, 1073 (1996).

\bibitem{metr} L. Pezze and A. Smerzi,Phys. Rev. Lett. 102,
100401(2009)

\bibitem{Ramij15}  R. Rahaman, G. Kar, ``GHZ correlation provides secure Anonymous Veto Protocol'', \href{https://arxiv.org/abs/1507.00592}{Arxiv: 1507.00592}.


\bibitem{multisteer} Q. Y. He and M. D. Reid, ``Genuine Multipartite Einstein-Podolsky-Rosen Steering'', \href{http://journals.aps.org/prl/abstract/10.1103/PhysRevLett.111.250403}{Phys. Rev. Lett. 111, 250403 (2013)},
D. Cavalcanti1, P. Skrzypczyk1,2, G.H. Aguilar3, R.V. Nery3, P.H. Souto Ribeiro3 and S.P. Walborn, ``Detection of entanglement in asymmetric quantum
networks and multipartite quantum steering'', \href{http://www.nature.com/nphys/journal/v11/n2/abs/nphys3202.html}{Nature Physics {\bf 11}, 167–172(2015)}

\bibitem{Raussendorf2001}  H. J. Briegel and R. Raussendorf, "Persistent Entanglement in Arrays of Interacting Particles", \href{http://dx.doi.org/10.1103/PhysRevLett.86.910}{Phys. Rev. Lett. 86, 910(2001)}.

\bibitem{Brunner12} 
N. Brunner and T. Vertesi, ``Persistency of entanglement and nonlocality in multipartite quantum systems'', \href{http://journals.aps.org/pra/abstract/10.1103/PhysRevA.86.042113}{Phys. Rev. A 86, 042113 (2012)} 

\bibitem{Vertesi16} P. Divi\'{a}nszky, R. Trencs\'{e}nyi, E. Bene, and T. V\'{e}rtesi, ``Bounding the persistency of the nonlocality of W states'', \href{http://dx.doi.org/10.1103/PhysRevA.93.042113}{Phys. Rev. A, {\bf 93}, 042113 (2016)}

\bibitem{fei} H-H. Qin, S-M. Fei, and X. Li-Jost, ``Trade-off relations of Bell violations among pairwise qubit systems'', \href{http://dx.doi.org/10.1103/PhysRevA.92.062339}{Physical Review A {\bf 92}, 062339 (2015)}

\bibitem{toner} B. Toner, F. Verstraete, ``Monogamy of Bell correlations and Tsirelson's bound'', \href{https://arxiv.org/abs/quant-ph/0611001v1}{arXiv:quant-ph/0611001v1}.

\bibitem{entdet} O. G\"{u}hne, G. T\'{o}th, ``Entanglement detection'', \href{http://www.sciencedirect.com/science/article/pii/S0370157309000623}{Phys. Reports {\bf 474} 1 (2009)}.

\bibitem{guhne} O. G\"{u}hne, and M. Seevinck, ``Separability criteria for genuine multiparticle entanglement'', \href{http://iopscience.iop.org/article/10.1088/1367-2630/12/5/053002/meta;jsessionid=53A177F02A8D33E05B44BD6C5E6104D4.c6.iopscience.cld.iop.org}{New J. Phys. {\bf 12}, 053002 (2010) }.

\bibitem{Popescu95}  S. Popescu, "Bell's Inequalities and Density Matrices: Revealing “Hidden” Nonlocality", \href{http://dx.doi.org/10.1103/PhysRevLett.74.2619}{Phys. Rev. Lett. 74, 2619 (1995)}. 


\bibitem{Werner89}  R. F. Werner, "Quantum states with Einstein-Podolsky-Rosen correlations admitting a hidden-variable model", \href{http://dx.doi.org/10.1103/PhysRevA.40.4277}{Phys. Rev. A 40, 4277 (1989)}. 


\bibitem{mohamed2006}  M. Bourennane, M. Eibl, S. Gaertner, N. Kiesel, C. Kurtsiefer, and H. Weinfurter, "Entanglement Persistency of Multiphoton Entangled States", \href{http://dx.doi.org/10.1103/PhysRevLett.96.100502}{Phys. Rev. Lett. 96, 100502 (2006)}. 

\bibitem{Bancal} J-D. Bancal, J. Barrett, N. Gisin, and S. Pironio, ``Definitions of multipartite nonlocality'', \href{http://dx.doi.org/10.1103/PhysRevA.88.014102}{Phys. Rev. A {\bf 88}, 014102 (2013)}.


\bibitem{Ver02}
F. Verstraete, J. Dehaene, B. De Moor, H. Verschelde, \href{http://dx.doi.org/10.1103/PhysRevA.65.052112}{Phys. Rev. A \textbf{65}, 052112 (2002)}.

\bibitem{Jeba16} C. Jebaratnam, ``Detecting genuine multipartite entanglement in steering scenarios'', \href{http://journals.aps.org/pra/abstract/10.1103/PhysRevA.93.052311}{Phys. Rev. A {\bf 93}, 052311 (2016)}

\bibitem{Uh00}
A. Uhlmann, \href{http://dx.doi.org/10.1103/PhysRevA.62.032307}{Phys. Rev. A {\bf 62}, 032307 (2000)}; A. Wong and N. Christensen, \href{http://dx.doi.org/10.1103/PhysRevA.63.044301}{ibid. {\bf 63}, 044301 (2001)}; S. S. Bullock and G. K. Brennen, \href{http://scitation.aip.org/content/aip/journal/jmp/45/6/10.1063/1.1723701;jsessionid=KlXulfNwo-1QF6ZkzmOWChDu.x-aip-live-02}{J. Math. Phys. {\bf 45}, 2447 (2004).}

\bibitem{deftangle} W. K. Wootters, ``Entanglement of Formation of an Arbitrary State of Two Qubits'', \href{http://dx.doi.org/10.1103/PhysRevLett.80.2245}{Phys. Rev. Lett. {\bf 80}, 2245 (1998)}

\bibitem{GBS} G. Gour, S. Bandyopadhyay and B. C. Sanders, ``Dual monogamy inequality for entanglement'', \href{http://dx.doi.org/10.1063/1.2435088}{J. Math. Phys. 48, 012108 (2007)}.



\bibitem{Ver03} F. Verstraete, J. Dehaene, and B. De Moor, ``Normal forms and entanglement measures for multipartite quantum states'', \href{http://dx.doi.org/10.1103/PhysRevA.68.012103}{Phys. Rev. A {\bf 68}, 012103 (2003)}.


\bibitem{Sliwa03} C. Śliwa, ``Symmetries of the Bell correlation inequalities'', \href{http://www.sciencedirect.com/science/article/pii/S0375960103011150}{Phys. Lett. A, {\bf 317}, 165 (2003)}.


\bibitem{Tran14} M. C. Tran, W. Laskowski and T. Paterek, ``The Werner gap in the presence of simple coloured noise'', \href{http://iopscience.iop.org/article/10.1088/1751-8113/47/42/424025/meta;jsessionid=7A312484461B1CF9FE06FA830599B074.c2.iopscience.cld.iop.org}{J. Phys. A: Math Theo {\bf 47}, 42 (2014)}

\bibitem{Jevtic15} S. Jevtic, M. J. W. Hall, M. R. Anderson, M. Zwierz, and H. M. Wiseman, ``Einstein–Podolsky–Rosen steering and the steering ellipsoid'', \href{https://www.osapublishing.org/josab/abstract.cfm?uri=josab-32-4-A40}{Journal of the Optical Society of America B, {\bf 32}, A40 (2015)}.

\bibitem{Dicke} R. H. Dicke, ``Coherence in Spontaneous Radiation Processes'', \href{http://dx.doi.org/10.1103/PhysRev.93.99}{Phys. Rev. 93, 99 (1954)}.


\end{thebibliography}
\end{document}